\newcommand{\bfone}{\mbox{$\mathbf 1$}}
\newcommand{\bfzero}{\mbox{$\mathbf 0$}}
\newcommand{\cc}{\mbox{$\mathbf c$}}
\def\gg{\mbox{$\mathbf g$}}
\def\qq{\mbox{$\mathbf q$}}
\def\sb{\mbox{$\mathbf s$}}
\newcommand{\x}{\mbox{$\mathbf x$}}
\def\XX{\mbox{$\mathbf X$}}
\def\xx{\mbox{$\mathbf x$}}
\newcommand{\yy}{\mbox{$\mathbf y$}}
\newcommand{\aalpha}{\boldsymbol{\alpha}}
\newcommand{\nnu}{\boldsymbol{\nu}}
\newcommand{\xxi}{\boldsymbol{\xi}}
\newcommand{\zzeta}{\boldsymbol{\zeta}}
\newtheorem{corollary}{Corollary}
\newtheorem{definition}{Definition}
\newtheorem{proposition}{Proposition}
\newtheorem{theorem}{Theorem}
\begin{document}


\title{Non-identifiability, equivalence classes, and attribute-specific classification in Q-matrix based Cognitive Diagnosis Models}


\author{Stephanie S.\ Zhang, Lawrence T.\ DeCarlo, and Zhiliang Ying\\Columbia University}
\maketitle


\begin{abstract}
There has been growing interest in recent years in Q-matrix based cognitive diagnosis models. Parameter estimation and respondent classification under these models may suffer due to identifiability issues. Non-identifiability can be described by a partition separating attribute profiles into groups of those with identical likelihoods. Marginal identifiability concerns the identifiability of individual attributes. Maximum likelihood estimation of the proportion of respondents within each equivalence class is consistent, making possible a new measure of assessment quality reporting the proportion of respondents for whom each individual attribute is marginally identifiable. Arising from this is a new posterior-based classification method adjusting for non-identifiability.\\\\
\emph{Keywords:} CDM, diagnostic classification, DINA, DINO, NIAD-DINA, Q-matrix, consistency, identifiability
\end{abstract}

\section{Introduction}
Diagnostic assessments are created with the goal of making classification-based decisions about respondents' possession  of multiple latent traits, also known as attributes. Researchers have brought a number of tools to bear on the problem of diagnostic classification, including multidimensional IRT, factor analysis,  the rule-space method, the attribute hierarchy method, clustering methods, and cognitive diagnosis models (CDMs); for a recent review, see \citeA{Rupp}. CDMs are multidimensional latent variable models with a vector of binary latent variables representing mastery of a finite set of skills whose analysis results in a probabilistic attribute profile; this makes them well-suited to diagnostic classification. Well known models include the Deterministic Input, Noisy ``And'' Gate (DINA) model, the Deterministic Input, Noisy ``Or'' Gate (DINO) model, the Noisy Inputs, Deterministic ``And'' Gate (NIDA) model, the Noisy Inputs, Deterministic ``Or'' Gate (NIDO) model, and the Conjunctive Reparameterized Unified Model (C-RUM), among others \cite{Rupp, Haertel, Junker, dela,Maris,Templin, TatsuokaC, Templin2006, dela2008}.

The DINA model is one of the best known and widely used CDMs. Underlying the model is the assumption that, before slipping and guessing come into play, a respondent must have mastered all necessary (as specified by a loading matrix known as the Q-matrix) attributes required by a particular item in order to answer that item correctly. Thus it is a conjunctive, non-compensatory model, well-suited to educational assessments in areas such as mathematics where correct answers are obtained by correctly employing all of an item's required skills together. The DINA model has been frequently employed in the analysis of  assessments, including the widely analyzed fraction subtraction data set of \citeA{Tatsuoka1990} (see \citeNP{dela2009, dela, delaDoug08, THD06, deCarlo2011, HensonTemplin09}). However, even after many refinements to the methodology there are still some persistent issues. In the fraction subtraction data, for example, respondents  who answer all items incorrectly are often classified as having most of the skills \cite{deCarlo2011}. Classification issues of this type can result model misspecification, but they can also be the unavoidable consequence of the structure of the assessment. Specifically in the DINA model, attributes that appear solely in conjunction with other attributes are problematic \cite{deCarlo2011}. This is due to an issue with attribute identifiability, which has long been known \cite{Tatsuoka1991,DiBello,deCarlo2011}, but tends to be ignored in practice.

This paper gives a formal treatment of  the identifiability issue of the DINA model and related CDMs. Since an assessment with fully identifiable attributes is often unavailable, we include guidelines for classification under non-identifiability and a consistent measure for the extent of non-identifiability. This allows classification error control and assessment evaluation in terms of identifiability.

The paper begins by reviewing some basic concepts, including the DINA model and its variants, in Section~\ref{sec:background}. We introduce the issue of identifiability for Q-matrix based assessments in Section~\ref{sec:prob}. In Section~\ref{sec:part}, we explain the use of equivalence classes and partitions to fully describe the structure of the attribute profile space, in terms of identifiability; an algorithm to generate the partition, given any Q-matrix, is included. Partitioning allows consistent estimation of the proportion of individuals in each group of equivalent attribute profiles, as explained in Section~\ref{sec:est}. These results are extended to individual attributes via {marginal identifiability} in  Section~\ref{sec:single}. In fact, the consistent estimation of the proportion of the population for which each attribute is marginally identifiable leads to a reliable measure of exam quality (with respect to identifiability). We also create a decision rule for respondent classification which controls misclassification probabilities in Section~\ref{sec:class}. Section~\ref{sec:exts} examines the implications of these methods to several variants of the DINA, in addition to another Q-matrix based CDM, the DINO model. Finally, results derived from both simulation and \citeauthor{Tatsuoka1990}'s fractions data set are reported in Section~\ref{sec:results}.

\section{Background}\label{sec:background}
Throughout the paper, we will be using some standard concepts from the study of CDMs. Some specific terminology and notations are listed below.
\begin{description}
\item[Attributes] are the respondent's (unobserved) mastery of certain skills. If we suppose that there are $N$ respondents and $K$ attributes, let the matrix of attributes be $A = (\alpha_{ik})$, where where $\alpha_{ik} \in \{0,1\}$ indicates the presence or absence of the $k$-th attribute in the $i$-th respondent. An \emph{attribute profile} $\aalpha = (\alpha_1,\ldots, \alpha_K)^\top$ is the vector of all attributes; an individual respondent $i$ will have attribute profile $\aalpha^i$ such that $\alpha^i_k = \alpha_{ik}$.

\item[Responses] are the respondent's binary responses to items. Given $N$ respondents and $J$ items, the responses can be written as a $N\times J$ matrix $X = (X_{ij})$, where  $X_{ij} \in \{0,1\}$ is the response of the $i$-th respondent to the $j$-th item. The $i$-th respondent's responses will be denoted by the vector $\XX^i$, where the $j$-th element $X^i_j = X_{ij}$ for all $i,j$.

\item[The Q-matrix] is the link between the items and their attribute requirements. It is a $J\times K$ matrix $Q = (q_{jk})$, where for each $j,k$, $q_{jk} \in \{0,1\}$ indicates whether the $j$-th item requires the $k$-th attribute. From the Q-matrix we can extract the attribute requirements of an item $j$ as the vector $\qq^j$, where the $k$-th element $q^j_k =  q_{jk}$ for all $j,k$. 
\end{description}

\subsection{The DINA Model}

This paper focuses on the DINA model, one of the most widely used CDMs. Under the DINA model, given an attribute profile $\aalpha$ and a Q-matrix $Q$, we can further define the quantity
$$\xi_j(Q,\aalpha)  = \prod_{k=1}^K (\alpha_k)^{q_{jk}} = \textbf{1}(\alpha_k \geq q_{jk}: k = 1, \ldots, K),$$
which indicates whether a respondent with attribute profile $\aalpha$ possesses all the attributes required for item $j$. If we suppose no uncertainty in the response, then a respondent $i$ with attribute profile $\aalpha$ will have responses $X_{ij} = \xi_j(Q,\aalpha)$ for $j=1,\ldots, J$. Thus, the vector $\xxi = (\xi_1,\ldots, \xi_J)^\top$ is known as the \emph{ideal response vector}. 

In the DINA model, uncertainty is incorporated at the item level. With each item $j = 1, \ldots, J$, we associate a slipping parameter $s_j = P(X_j = 0|\xi_j = 1)$ and a guessing parameter $g_j = P(X_j = 1|\xi_j = 0)$. Each $X_j$ is Bernoulli with success probability $(1-s_j)^{\xi_j}g_j^{1-\xi_j}$. Thus, the probability of a particular response vector $\xx$ given the ideal response vector $\xxi$ is
\begin{align}\label{eqDINA}
P(\xx|\xxi,\sb,\gg) &= \prod_{j=1}^J [(1-s_j)^{\xi_j} g_j^{1-\xi_j}]^{x_j}  [1-(1-s_j)^{\xi_j} g_j^{1-\xi_j}]^{1-x_j}\nonumber\\
 &= \prod_{j=1}^J (1-s_j)^{\xi_jx_j} g_j^{(1-\xi_j)x_j}s_j^{\xi_j(1-x_j)}(1- g_j)^{(1-\xi_j)(1-x_j)}
 \end{align}
In addition to $\sb$ and $\gg$, the response distribution also depends on $\nnu = (\nu_{\aalpha})_{\aalpha\in\{0,1\}^K}$, the proportion of individuals possessing each attribute profile. Generally, diagnostic classification is based on the posterior $p(\aalpha|\xx)$, which is calculated using and can be very sensitive to the prior, $\nnu$. When $\sb$, $\gg$, and $\nnu$ are unknown, these parameters must be simultaneously estimated. 

\subsection{Variants of the DINA Model}\label{sec:DINAvar}
Several variants of the DINA can be constructed by restricting $\nnu$ to some lower-dimensional subspace. For example, assuming independence among the attributes so that
$$\nu_{\aalpha} = p(\aalpha) = \prod_{k=1}^K p(\alpha_k)$$
reduces the $2^K-1$-dimensional parameter space to a $K$-dimensional one. This restriction is referred to as the independent DINA (ind-DINA) from hereon. It is convenient to model each $\alpha_k$ with a logistic link, so that
$$p(\alpha_k) = \exp(\alpha_kb_k)/[1+\exp(b_k)],$$
where $b_k$ denotes the attribute's `difficulty.'

Another alternative is the higher-order DINA (HO-DINA) model \cite{dela, THTR2008}. This model assumes that the probability of possessing a skill is dependent on a continuous skill factor $\theta$ following the standard normal distribution, so that
$$\nu_{\aalpha} = p(\aalpha) = \int_\theta p(\aalpha|\theta)p(\theta) d\theta.$$
Each individual attribute is assumed to be conditionally independent given $\theta$, so that
$$p(\aalpha|\theta) = \prod_{k=1}^K p(\alpha_k|\theta).$$
Finally the individual probabilities $p(\alpha_k|\theta)$ can be modeled with a logistic link,
$$p(\alpha_k|\theta) = \exp(\alpha_k(b_k + a_k\theta))/[1+\exp(b_k+a_k\theta)],$$
where $b_k$ denotes the attribute's `difficulty,' and $a_k$ is the attribute discrimination parameter. It is also possible to fit a restricted version of this model, for which all  the $a_k$ must be equal, as in \citeA{dela}. This is referred to as the restricted higher order DINA (RHO-DINA) model \cite{deCarlo2011}.

\section{Identifiability Issues in the DINA}\label{sec:prob}
Diagnostic assessments are meant to provide detailed information about respondents' possession of a variety of traits. Preferably, a well-designed exam will be able to provide information about each trait for every respondent. However, recovering information about the latent variables from a `0' response may be difficult; in comparison to a `1' response, which suggests that a respondent is more likely to possess each attribute associated with that item, a `0' response may indicate the failure to master ony one or several of the required attributes. Consider the following two simple Q-matrices for the DINA model:
\begin{equation}
Q_1 = \begin{pmatrix}1&0\\0&1\end{pmatrix},\ Q_2 = \begin{pmatrix}1&0\\1&1\end{pmatrix}.
\end{equation}
In assessments based on the Q-matrix $Q_1$, a correct response to each item generally indicates a higher probability that the respondent possesses the corresponding attribute, while an incorrect response indicates a lower probability of the same. However, with $Q_2$, an incorrect response to the second item only implies that at least one of the attributes is probably missing. In fact, given that a student does not possess Attribute 1, Item 2 provides no information about his or her mastery of Attribute 2, and so respondents with attribute profiles  $(0,0)$ and $(0,1)$ have statistically identical responses. Thus, the assessment as a whole is incapable of differentiating between the two profiles, and any classification decision between them will solely be a reflection of the prior information.
 
A slightly more complicated situation appears if we add a third attribute to the example above. Consider an assessment following the DINA model with Q-matrix $Q_3$, where
\begin{equation}
Q_3 = \begin{pmatrix}1&0 & 0 \\1&1 & 0 \\ 0 & 1 & 1\end{pmatrix}.
\label{eq:Q3}
\end{equation}
The attribute requirements of the first two items match those of the items corresponding to $Q_2$. Now, however, the proportion of individuals for whom Attribute 2 is not identifiable is smaller. Of those who do not possess Attribute 1, some will possess Attribute 3. Then Attribute 2 is identifiable because of differing response distributions on Item 3. However, response distributions for those with attribute profiles $(0,1,0)$ and $(0,0,0)$ are still indistinguishable. Thus, although the assessment provides no information about Attribute 2 for a smaller part of the population, the issue has not been completely resolved.

\section{Partitioning the Attribute Profile Space}\label{sec:part}
We begin with an intuitive criterion for deciding whether an assessment has the ability to differentiate between two attribute profiles.
\begin{definition}\label{def:sep}
Two attribute profiles are \emph{separable} if they lead to different response distributions.
\end{definition}
The differing response distributions of separable attribute profiles imply that the data will favor one profile or the other; there is some differential effect on the likelihood and thus the posterior. Profiles that are not separable are statistically identical, with equivalent likelihood functions, making any differences in their posteriors simply artifacts of the prior.

Determining whether attributes are separable can be done without the full response distribution;
in fact, only the ideal responses $\xxi(Q,\aalpha)$ are necessary.
\begin{proposition}\label{thmIdeal}
Given a Q-matrix $Q$ and slipping and guessing parameters $\sb$ and $\gg$, two attribute profiles $\aalpha^1$ and $\aalpha^2$ can be separated if and only if they produce ideal response vectors $\xxi^1 = \xxi(Q,\aalpha^1)$ and $\xxi^2 = (Q,\aalpha^2)$ such that for some $j \in \{1,\ldots, J\}$,  $\xxi^1_j \neq \xxi^2_j$ and $1-s_j \neq g_j$.
\end{proposition}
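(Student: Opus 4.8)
The plan is to exploit the product structure that the DINA likelihood~\eqref{eqDINA} imposes on the response vector. Once an attribute profile is fixed, its ideal response vector $\xxi$ is determined, and \eqref{eqDINA} factors across items, so the coordinates $X_1,\dots,X_J$ are independent, with $X_j$ Bernoulli of success probability $(1-s_j)^{\xi_j}g_j^{1-\xi_j}$. The crucial observation is that this marginal depends on the profile $\aalpha$ only through the single bit $\xi_j = \xi_j(Q,\aalpha)$. Hence the entire response distribution of a profile is a function of its ideal response vector alone, which already brings the question down to the level of the $\xxi$'s.

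First I would reduce separability to a statement about the individual marginals. For two profiles $\aalpha^1,\aalpha^2$ with ideal responses $\xxi^1,\xxi^2$, the response distributions are the product measures $\prod_j \mu_j^{(1)}$ and $\prod_j \mu_j^{(2)}$ on $\{0,1\}^J$, where $\mu_j^{(m)}$ denotes the Bernoulli law of $X_j$ under profile $m$. Since each $\mu_j^{(m)}$ is a probability measure, marginalizing the product onto coordinate $j$ returns exactly $\mu_j^{(m)}$; therefore the two product measures coincide if and only if $\mu_j^{(1)} = \mu_j^{(2)}$ for every $j$. Equivalently, the profiles are separable precisely when some coordinate marginal differs, converting the proposition into an item-by-item comparison.

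It then remains to determine when a single marginal differs. A Bernoulli law is pinned down by its success probability, which equals $1-s_j$ when $\xi_j = 1$ and $g_j$ when $\xi_j = 0$. If $\xi^1_j = \xi^2_j$, the two success probabilities are identical, so $\mu_j^{(1)} = \mu_j^{(2)}$ irrespective of $s_j,g_j$. If instead $\xi^1_j \neq \xi^2_j$, one success probability is $1-s_j$ and the other is $g_j$, so $\mu_j^{(1)} \neq \mu_j^{(2)}$ exactly when $1-s_j \neq g_j$. Thus a coordinate marginal differs if and only if both $\xi^1_j \neq \xi^2_j$ and $1-s_j \neq g_j$, and combining this with the reduction of the previous paragraph yields the stated equivalence.

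The main obstacle I anticipate is not any hard computation but phrasing the reduction step cleanly: one must justify that equality of the joint distributions is equivalent to equality of all the marginals, which is exactly where the conditional independence across items is essential, and one must handle the degenerate case $1-s_j = g_j$, in which an item carries no discriminating information even though $\xi^1_j \neq \xi^2_j$. Once this degenerate case is absorbed into the ``or'' appearing in the contrapositive, the remaining arguments are direct.
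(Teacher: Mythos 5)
Your argument is correct and follows essentially the same route as the paper's own proof: both factor the DINA likelihood across items, observe that the $j$-th marginal is Bernoulli with success probability $(1-s_j)^{\xi_j}g_j^{1-\xi_j}$, and reduce separability to an item-by-item comparison in which the degenerate case $1-s_j=g_j$ kills the discriminating power of item $j$. Your explicit statement that two product measures coincide iff all coordinate marginals coincide is just a cleaner packaging of the same computation the paper carries out directly.
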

Throughout the rest of this paper we assume that $1-s_j \neq g_j$ for each $j = 1,\ldots, J$, which simplifies Proposition \ref{thmIdeal} into Corollary \ref{corrIdeal}. Should such an item indeed be present, then it has no discriminating power and may be omitted.
\begin{corollary}\label{corrIdeal}
 If every item $j$ has different success probabilities given $\xi_j = 1$ or given $\xi_j = 0$, i.e.\ $1-s_j \neq g_j$ for $j = 1,\ldots, J$, then two attribute profiles can be separated if and only if they produce different ideal response vectors.
 \end{corollary}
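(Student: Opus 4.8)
The plan is to obtain Corollary~\ref{corrIdeal} as an immediate specialization of Proposition~\ref{thmIdeal}. The proposition says that $\aalpha^1$ and $\aalpha^2$ are separable exactly when there is some item $j$ for which \emph{both} $\xi^1_j \neq \xi^2_j$ and $1-s_j \neq g_j$. Once the blanket assumption $1-s_j \neq g_j$ is imposed for every $j$, the second clause holds automatically for all items, so the criterion collapses to the existence of some $j$ with $\xi^1_j \neq \xi^2_j$, i.e.\ to $\xxi^1 \neq \xxi^2$. In that sense the corollary follows with essentially no extra work, and the real content lives in Proposition~\ref{thmIdeal}.

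To make the argument self-contained, I would also verify it directly from the response model. First I would note that, by the product form of \eqref{eqDINA}, the joint response distribution factorizes across items, so two profiles induce the same distribution over response vectors if and only if they induce the same marginal Bernoulli distribution on each item $j$ separately. The success probability on item $j$ is $(1-s_j)^{\xi_j} g_j^{1-\xi_j}$, which equals $1-s_j$ when $\xi_j = 1$ and $g_j$ when $\xi_j = 0$; hence the two item-$j$ marginals agree whenever $\xi^1_j = \xi^2_j$, and also in the degenerate case $\xi^1_j \neq \xi^2_j$ with $1-s_j = g_j$.

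For the forward direction, if $\xxi^1 = \xxi^2$ then every per-item success probability matches, the joint distributions coincide, and the profiles are not separable. For the converse I would invoke the standing assumption: if $\xxi^1 \neq \xxi^2$, pick a coordinate $j$ where they differ; since $1-s_j \neq g_j$, the two item-$j$ marginals are genuinely different, and because the joint is a product over items this single discrepancy already forces the joint distributions to differ, so the profiles are separable. Combining the two directions yields the stated equivalence.

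I do not expect a real obstacle here; the only points that require care are (i) recording that the product structure of \eqref{eqDINA} reduces equality of joint distributions to item-by-item equality of marginals, and (ii) using $1-s_j \neq g_j$ at precisely the place it is needed, namely to rule out the coincidence $1-s_j = g_j$ that would let two different ideal responses still produce identical item distributions. This is exactly where the convention of discarding ``non-discriminating'' items does its work.
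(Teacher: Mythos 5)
Your proposal is correct and matches the paper's treatment: the paper derives Corollary~\ref{corrIdeal} as an immediate specialization of Proposition~\ref{thmIdeal} once $1-s_j \neq g_j$ is assumed for all $j$, exactly as you do. Your supplementary self-contained verification simply retraces the paper's proof of Proposition~\ref{thmIdeal} (item-by-item factorization plus the single discriminating coordinate), so it adds no new route.
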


Lastly, it is also of interest whether an attribute profile can be separated from all other attribute profiles, and is thus identifiable. This definition of identifiability will be tied to the general statistical concept in Section~\ref{sec:est}.
\begin{definition}
An attribute profile $\aalpha$ is \emph{identifiable} when it can be separated from any other attribute profile $\aalpha'\neq \aalpha$.
\end{definition}

\subsection{Complete Separation of Attribute Profiles}
The first step in understanding the identifiability issue is determining under what circumstances all attribute profiles are identifiable. This depends on the Q-matrix, which is called complete when it leads to full identifiability\cite{Chiu}. Formally, we have the following definition:
\begin{definition}
Under a \emph{complete} Q-matrix, all attribute profiles are identifiable, i.e.\ $\xxi(Q,\aalpha) \neq \xxi(Q,\aalpha')$ iff $\aalpha \neq \aalpha'$. 
\end{definition}
The requirements for completeness have long been known \cite{Tatsuoka1991,DiBello,Chiu}. In essence, the assessment must contain at least one item devoted solely to each attribute. In terms of the Q-matrix, this means that for each $k \in \{1,\ldots, K\}$, there should be at least one row with an entry of `1' solely in the $k$-th position. 
\begin{proposition}\label{prop:complete}
Let $R_Q$ be the set of row vectors of Q-matrix $Q$. Then $Q$ is \emph{complete} iff $\{e_k: k = 1,\ldots, K\} \subset R_Q$, where $e_k$ is a vector such that the $k$-th element is one and all other elements are zero. 
\end{proposition}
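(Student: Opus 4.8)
The definition of completeness says precisely that the map $\aalpha \mapsto \xxi(Q,\aalpha)$ is injective on $\{0,1\}^K$, so the plan is to prove the biconditional by treating the two implications separately: sufficiency ($\{e_k\} \subset R_Q \Rightarrow$ complete) directly, and necessity (complete $\Rightarrow \{e_k\} \subset R_Q$) by contraposition. Throughout I would work from the product form $\xi_j(Q,\aalpha) = \prod_{l=1}^K \alpha_l^{q_{jl}}$, whose only relevant feature is that $\xi_j = 1$ exactly when $\aalpha$ possesses every attribute required by item $j$.

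For sufficiency, suppose every standard basis vector $e_k$ appears among the rows of $Q$. Given any two distinct profiles $\aalpha \neq \aalpha'$, they must differ in some coordinate $k$; without loss of generality $\alpha_k = 1$ and $\alpha'_k = 0$. Let $j$ be an item whose row is $e_k$, so $q_{jk} = 1$ and $q_{jl} = 0$ for $l \neq k$. The product form then collapses to $\xi_j(Q,\aalpha) = \alpha_k = 1$ and $\xi_j(Q,\aalpha') = \alpha'_k = 0$, so the two ideal response vectors disagree at coordinate $j$. Hence distinct profiles yield distinct ideal responses, and $Q$ is complete.

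For necessity I would argue the contrapositive: if $e_k \notin R_Q$ for some $k$, exhibit two distinct profiles with identical ideal responses. The natural witnesses are the profile $e_k$ (attribute $k$ alone) and the all-zero profile $\bfzero$. Examining an arbitrary item $j$: if its row $\qq^j$ is the zero vector, then $\xi_j = 1$ for both profiles; otherwise $\qq^j$ requires at least one attribute, so $\xi_j(Q,\bfzero) = 0$, while $\xi_j(Q, e_k) = 1$ would force the only required attribute to be $k$, i.e.\ $\qq^j = e_k$, which is excluded by assumption. Thus $\xi_j(Q, e_k) = 0 = \xi_j(Q,\bfzero)$ on every nonzero row as well, so $\xxi(Q, e_k) = \xxi(Q,\bfzero)$ even though $e_k \neq \bfzero$. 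This violates injectivity, so a complete $Q$ must contain every $e_k$.

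The argument is almost entirely bookkeeping once the product form of $\xi_j$ is in hand; the only genuine decision is the choice of witness pair in the necessity direction, and I expect that to be the crux. Pairing the single-attribute profile $e_k$ against $\bfzero$ is what makes the excluded row $e_k$ the unique obstruction, and the one edge case to handle carefully is items with empty attribute requirements (zero rows of $Q$), which contribute $\xi_j = 1$ universally and therefore never separate any profiles.
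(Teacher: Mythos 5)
Your proof is correct and follows essentially the same route as the paper's: the sufficiency direction reads off each $\alpha_k$ from the item with row $e_k$, and the necessity direction uses the same witness pair $e_k$ versus $\bfzero$, with the same observation that any nonzero row other than $e_k$ forces $\xi_j = 0$ for both. The handling of zero rows is a minor bookkeeping point that the paper's $0^0$ convention absorbs implicitly; otherwise the two arguments coincide.
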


\subsection{Partial Separation of Attribute Profiles}
While a complete Q-matrix is necessary for full identifiability, many of the Q-matrices used in practice are unfortunately incomplete. In fact, creating assessments with complete Q-matrices is oftentimes infeasible, and requiring a complete matrix for analysis would make models like the DINA model highly impractical. This makes the partition, a standard mathematical construct, an essential tool in accurately and systematically describe the structure of nonidentifiability in the DINA. Partitions are formed from equivalence relations, which have the following requirements:

\begin{definition}
The relation `$a \sim b$' is an \emph{equivalence relation} if it is 
\begin{itemize}
\item reflexive: $a \sim a$
\item symmetric: $a \sim b$ iff $b \sim a$
\item transitive: If $a\sim b$ and $b \sim c$, then $a \sim c$.
\end{itemize}
\end{definition}

\begin{proposition}\label{prop:equiv}
Let `$\sim$' denote the binary relation `cannot be separated,' where $\aalpha^1 \sim \aalpha^2$ if and only if $\xxi(Q,\aalpha^1) = \xxi(Q,\aalpha^2)$. Then `$\sim$' is an equivalence relation.
\end{proposition}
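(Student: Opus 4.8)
The plan is to observe that the relation `$\sim$' is nothing more than the pullback of the equality relation through the ideal-response map $\aalpha \mapsto \xxi(Q,\aalpha)$: by definition $\aalpha^1 \sim \aalpha^2$ holds exactly when $\xxi(Q,\aalpha^1)$ and $\xxi(Q,\aalpha^2)$ are the same vector in $\{0,1\}^J$. (By Corollary \ref{corrIdeal}, under the standing assumption $1-s_j \neq g_j$, this equality of ideal responses is precisely the statement that the two profiles cannot be separated, so the relation is well-defined as described.) Since equality of vectors is itself reflexive, symmetric, and transitive, the relation `$\sim$' should inherit each of these properties automatically; the proof is therefore a direct verification of the three defining axioms, with each one reduced to the corresponding property of `$=$'.

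Concretely, I would check the three conditions in turn. For \emph{reflexivity}, $\xxi(Q,\aalpha) = \xxi(Q,\aalpha)$ holds trivially for every $\aalpha$, so $\aalpha \sim \aalpha$. For \emph{symmetry}, suppose $\aalpha^1 \sim \aalpha^2$, i.e.\ $\xxi(Q,\aalpha^1) = \xxi(Q,\aalpha^2)$; then the reversed equality $\xxi(Q,\aalpha^2) = \xxi(Q,\aalpha^1)$ holds by symmetry of `$=$', giving $\aalpha^2 \sim \aalpha^1$. For \emph{transitivity}, suppose $\aalpha^1 \sim \aalpha^2$ and $\aalpha^2 \sim \aalpha^3$, so that $\xxi(Q,\aalpha^1) = \xxi(Q,\aalpha^2)$ and $\xxi(Q,\aalpha^2) = \xxi(Q,\aalpha^3)$; chaining these equalities yields $\xxi(Q,\aalpha^1) = \xxi(Q,\aalpha^3)$, hence $\aalpha^1 \sim \aalpha^3$.

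I do not expect any genuine obstacle here: the result is a routine consequence of the fact that any relation of the form ``$a \sim b$ iff $f(a) = f(b)$'' is an equivalence relation, the function in this case being $f = \xxi(Q,\cdot)$. The only point worth stating carefully, rather than a difficulty, is the identification of ``cannot be separated'' with equality of ideal response vectors, which is exactly what Corollary \ref{corrIdeal} supplies; once that translation is in place, the three axioms follow immediately from the properties of equality. The payoff of the proposition is conceptual rather than technical: establishing that `$\sim$' is an equivalence relation licenses passing to the quotient and partitioning $\{0,1\}^K$ into equivalence classes of statistically indistinguishable profiles, which is the construction exploited throughout the remainder of Section~\ref{sec:part}.
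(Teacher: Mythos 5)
Your proof is correct and takes essentially the same approach as the paper: both verify reflexivity, symmetry, and transitivity directly by reducing each to the corresponding property of equality of the ideal response vectors $\xxi(Q,\cdot)$. The framing as a pullback of equality is a nice touch but does not change the substance of the argument.
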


Putting profiles into groups, known as equivalence classes, based on an equivalence relation results in a partition; in this case, any two attribute profiles in the same equivalence class cannot be separated, while any two in different classes can be. We denote a particular equivalence class by $[\aalpha]$, where $\aalpha$ may be any attribute profile in the class; literally, $[\aalpha]$ can be read as ``the set of attribute profiles equivalent to $\aalpha$.''

The simplest way of determining the partition would be to calculate the ideal response vector of  each of the $2^K$ attribute profiles and sort them lexicographically. This runs quickly in $\mathcal{O}(JK\cdot 2^K)$ time (refer to Table~\ref{tab:alg} for the step-by-step algorithm). For an alternative algorithm using Boolean algebra, see \citeA{Tatsuoka1991}.
Note that our algorithm results in equivalence classes labeled by their smallest member, which shall be called the \emph{minimal representative}.
The minimal representative has additional meaning as the attribute requirements of the corresponding ideal response vector and is therefore the most preferable member for labeling.

\begin{table}[ht]
\begin{center}
\caption{Algorithm for partitioning an attribute profile space }
\label{tab:alg}
\begin{tabular}{rp{3.75in}}
\toprule
Step & Procedure\\
\midrule
Input: & A $J\times K$ Q-matrix $Q$. \\
\noalign{\medskip}
(0) & (optional) Remove items with duplicate attribute requirements\\
(1) & List all $2^K$ attribute profiles $\aalpha$.\\
(2) & Find the ideal response vector $\xxi(Q,\aalpha)$ for each $\aalpha$.\\
(3) & Do a lexicographic (alphabetic) sort of the ideal response vectors.\\
(4) & Check whether each successive profile has the same ideal response vector as the previous profile. If so, $\aalpha$ is the first member of a new equivalence class $[\aalpha]$.  Else, $\aalpha$ is part of the current equivalence class.\\
\noalign{\medskip}
Output: & A list of equivalence classes $[\aalpha]$ and their members.\\
\bottomrule
\end{tabular}
\end{center}
\end{table}

\begin{table}[ht]
\begin{center}
\caption{Generating the partition associated with the Q-matrix $Q_3$.}
\label{tab:ex}
\begin{tabular}{cccccccccc}
\toprule
$Q_3$&&$\aalpha$ & $\xxi(Q_3,\aalpha)$&& $\aalpha$ & $\xxi(Q_3,\aalpha)$ &&$[\aalpha]$\\
\cmidrule(lr){1-1} \cmidrule(lr){3-4} \cmidrule(lr){6-7} \cmidrule(lr){9-9}
\multirow{8}{*}{$\begin{pmatrix}1&0&0\\1&1&0\\0&1&1\end{pmatrix}$} & \multirow{8}{*}{$\underrightarrow{\text{ (1),(2)}}$}&000&000&\multirow{8}{*}{$\underrightarrow{(3)}$}&000&000&\multirow{8}{*}{$\underrightarrow{\text{(4)}}$}&[000]\\
&&100&100&&010&000\\
&&010&000&&001&000\\
&&001&000&&011&001&&[011]\\
&&110&110&&100&100&&[100]\\
&&101&100&&101&100\\
&&011&001&&110&110&&[110]\\
&&111&111&&111&111&&[111]\\
\bottomrule
\end{tabular}
\flushleft
Steps from Table~\ref{tab:alg} labeled (1), (2), (3), and (4).\end{center}
\end{table}

As seen in Table~\ref{tab:ex}, performing the algorithm on the  $3\times 3$ Q-matrix $Q_3$ from \eqref{eq:Q3} results in five different equivalence classes, each of which is labeled with by its minimal representative: $[000] = \{000,010,001\}$, $[011] = \{011\}$, $[100] = \{100,101\}$, $[110] = \{110\}$, and $[111] = \{111\}$. Note that since the bracket notation may be read as `the equivalence class containing,' it is possible to change the labeling of each equivalence class by choosing any other member as the titular profile: $[000]$, $[010]$, and $[001]$ all refer to the same equivalence class, for example.

\section{Consistent Estimation}\label{sec:est}
We now consider the problem of parameter estimation, specifically that of $\nu_{\aalpha}$, the proportion of the population possessing each attribute profile $\aalpha$. Unless $\nnu$ is assumed known, its consistent estimation has important consequences for respondent classification and exam validity. Unfortunately, when an assessment's Q-matrix is incomplete, it is impossible to consistently estimate $\nnu$. For each equivalence class $[\aalpha]$, let $\nu_{[\aalpha]}$ be the proportion of the population possessing an attribute profile within that equivalence class. Then,
\begin{equation}\label{eq:priorClass}
\nu_{[\aalpha]} = \sum_{\aalpha'\in[\aalpha]}\nu_{\aalpha'}.
\end{equation}
The probability of observing any particular set of data depends only on $\nu_{[\aalpha]}$, since the probability of any response depends only on equivalence class membership, not on the respondent's possession of a specific profile. With an incomplete Q-matrix it is possible to observe populations with different distributions $\nnu^1 \not\equiv \nnu^2$ over the attribute profile space that have identical distributions over the equivalence classes $[\aalpha]$, i.e., $\nu^1_{[\aalpha]} = \nu^2_{[\aalpha]}$ for all $\aalpha \in \{0,1\}^K$, and thus identical response distributions. The phenomenon where different parameter values lead to identical response distributions is generally known as non-identifiability, and it destroys the ability of likelihood-based estimation methods to achieve consistency.

While consistent estimation of $\nu_{\aalpha}$ cannot be achieved, it is possible to consistently estimate the proportion of individuals within each equivalence class $[\aalpha]$.
\begin{theorem}\label{thm:consistent}
Suppose an assessment follows the DINA model, with known Q-matrix $Q$ and item parameters $\sb$ and $\gg$. Let $\nu_{[\aalpha]}$, representing  the proportion of the population possessing an attribute profile $\aalpha' \in [\aalpha]$, be defined as in \eqref{eq:priorClass}, and let the population parameter $\nnu$ be the vector of all $\nu_{[\aalpha]}$. We may write its likelihood as
$$L(\nnu) = p(X|\nnu) = \prod_{i=1}^N p(\xx^i|\nnu) = \prod_{i=1}^N \sum_{[\aalpha]}p(\xx^i|[\aalpha])\nu_{[\aalpha]}.$$
Then the maximum likelihood estimate $\hat\nnu$ of $\nnu$  is consistent as $N\rightarrow \infty$.
\end{theorem}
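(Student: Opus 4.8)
The plan is to reduce the theorem to two ingredients: first, that the reparameterized model is \emph{identifiable} in the equivalence-class proportions $\nnu = (\nu_{[\aalpha]})$, and second, that maximum likelihood consistency for an identifiable finite-dimensional model over a finite sample space follows from the standard argmax argument. Each response vector $\xx^i$ takes values in the finite set $\{0,1\}^J$, and by Corollary~\ref{corrIdeal} every respondent in a fixed equivalence class $[\aalpha]$ shares the same ideal response vector and hence the same conditional response distribution $p(\cdot\mid[\aalpha])$. The model is therefore the finite mixture
$$P_{\nnu}(\xx) = \sum_{[\aalpha]} \nu_{[\aalpha]}\, p(\xx\mid[\aalpha]),$$
with mixing weights $\nnu$ ranging over the compact probability simplex $\Delta$ indexed by the equivalence classes, and identifiability is exactly the injectivity of the map $\nnu \mapsto P_{\nnu}$ on $\Delta$.

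The crux, which I expect to be the main obstacle, is establishing this injectivity, which amounts to the \emph{linear independence} of the component distributions $\{p(\cdot\mid[\aalpha])\}$. I would argue this directly from the DINA likelihood \eqref{eqDINA}: conditional on an ideal response vector $\xxi \in \{0,1\}^J$, the item responses are independent Bernoulli, so $p(\cdot\mid\xxi)$ factors across items according to the $2\times2$ emission matrix
$$M_j = \begin{pmatrix} 1-g_j & g_j \\ s_j & 1-s_j \end{pmatrix},$$
with rows indexed by $\xi_j \in \{0,1\}$ and columns by $x_j \in \{0,1\}$. The array sending $\xxi$ to the distribution $p(\cdot\mid\xxi)$ over $\{0,1\}^J$ is then the Kronecker product $M_1 \otimes \cdots \otimes M_J$, which is invertible precisely when each factor is, i.e.\ when $\det M_j = 1 - s_j - g_j \neq 0$. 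Under the standing assumption $1-s_j \neq g_j$ this holds for every $j$, so the full family $\{p(\cdot\mid\xxi): \xxi \in \{0,1\}^J\}$ of $2^J$ distributions is linearly independent. Since distinct equivalence classes correspond to distinct ideal response vectors (again Corollary~\ref{corrIdeal}), the components actually appearing form a subset of a linearly independent family and are themselves independent; thus $\sum_{[\aalpha]} (\nu_{[\aalpha]} - \nu'_{[\aalpha]})\, p(\cdot\mid[\aalpha]) = 0$ forces $\nnu = \nnu'$, which is the desired identifiability.

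With identifiability in hand the consistency is routine. Writing the normalized log-likelihood as $N^{-1}\log L(\nnu) = \sum_{\xx} \hat P_N(\xx)\log P_{\nnu}(\xx)$, where $\hat P_N$ is the empirical distribution of the i.i.d.\ responses, the strong law of large numbers gives $\hat P_N(\xx) \to P_{\nnu_0}(\xx)$ almost surely at the true parameter $\nnu_0$, and since the sample space is finite this convergence is uniform over $\Delta$. The limiting objective $\nnu \mapsto \sum_{\xx} P_{\nnu_0}(\xx)\log P_{\nnu}(\xx)$ is continuous on the compact simplex $\Delta$ and, by Gibbs' inequality together with the identifiability just established, is maximized uniquely at $\nnu = \nnu_0$, since $\sum_{\xx} P_{\nnu_0}(\xx)\log\{P_{\nnu_0}(\xx)/P_{\nnu}(\xx)\} = \mathrm{KL}(P_{\nnu_0}\,\|\,P_{\nnu}) \geq 0$ vanishes only when $P_{\nnu} = P_{\nnu_0}$. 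A standard argmax (Wald-type) argument, combining uniform convergence, compactness, and a unique maximizer of the limit, then yields $\hat\nnu \to \nnu_0$. The only care needed is to keep all $P_{\nnu}(\xx)$ bounded away from zero so the logarithms are well behaved; this holds because $s_j, g_j \in (0,1)$ makes every response configuration have positive probability under every $\nnu \in \Delta$.
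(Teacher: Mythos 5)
Your proposal is correct, and the overall architecture (prove identifiability of the mixture in the equivalence-class weights, then invoke a standard Wald/argmax argument on the compact simplex) matches the paper's. However, your proof of the key identifiability step is genuinely different from, and in some respects cleaner than, the paper's. You establish linear independence of the component distributions $\{p(\cdot\mid\xxi)\}$ globally, by observing that the $2^J\times 2^J$ array of conditional probabilities factors as the Kronecker product $M_1\otimes\cdots\otimes M_J$ of the per-item emission matrices, each invertible exactly when $1-s_j\neq g_j$; linear independence of the full family of $2^J$ conditional laws then restricts to the subfamily indexed by the distinct ideal response vectors of the equivalence classes. The paper instead works with a $T$-matrix of upper-tail probabilities $t_{\xx,[\aalpha]}=p(\XX\geq\xx\mid[\aalpha])$, first handles the case $\gg\equiv\bfzero$ by exhibiting an $L\times L$ upper-triangular submatrix with nonzero diagonal (using the lexicographic order on minimal representatives and the partial order on profiles), and then lifts to general $\gg$ via a unit-lower-triangular transformation $D(\gg)$. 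Your Kronecker argument buys a shorter, more transparent proof that avoids the two-stage $\gg\equiv\bfzero$ reduction and the combinatorial bookkeeping of the triangularity argument, and it makes explicit where the standing assumption $1-s_j\neq g_j$ enters (as $\det M_j\neq 0$); the paper's $T$-matrix construction buys contact with the machinery used elsewhere in the literature for Q-matrix identifiability. You are also more careful than the paper about the final consistency step, which the paper dismisses as ``clearly evident''; your handling of the uniform convergence, the Gibbs/KL uniqueness of the maximizer, and the boundedness of the log terms is a welcome addition, with the minor caveat that response vectors $\xx$ with $P_{\nnu}(\xx)=0$ can only arise when some $s_j$ or $g_j$ sits on the boundary of $(0,1)$, in which case one simply drops the (almost surely unobserved) cells from the empirical sum.
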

Consistent estimation of the $\nu_{[\aalpha]}$ is an important result, justifying the results of the following sections. To emphasize the differences in parameter space and procedure, work based on equivalence classes $[\aalpha]$ rather than profiles $\aalpha$ will from hereon be referred to under the name of the Non-Identifiability ADjusted DINA (NIAD-DINA) model.

\section{Marginal Identifiability}\label{sec:single}
We now wish to extend the concept of identifiability to individual attributes.  This is motivated by the fact that, though the presence of multiple profiles in the same equivalence class signals non-identifiability, some individual attributes may still be identifiable within the class. To illustrate, consider the the Q-matrix $Q_3$ from \eqref{eq:Q3} and one of its equivalence classes, $[000] = \{000,010,001\}$. If a profile $\aalpha \in [000]$, then its first component $\alpha_1 = 0$, but the values of $\alpha_2$ and $\alpha_3$ are uncertain. Thus, posterior weight $p([000]|x)$ on this class counts as positive evidence that $\alpha_1 = 0$, but does not help in deciding $\alpha_2$ or $\alpha_3$. This observation motivates the following definition:
\begin{definition}
An attribute is \emph{marginally identifiable} within an equivalence class when either all members of that class possess that attribute or none of them do.
\end{definition}
Define the marginal identifiability indicator $\delta_{[\aalpha],k}$ as follows:
\begin{equation}\label{eq: identifiable}
\delta_{[\alpha],k} = \prod_{\aalpha' \in [\aalpha]} \alpha'_k + \prod_{\aalpha'\in[\aalpha]} (1-\alpha'_k).
\end{equation}
Then, $\delta_{[\aalpha],k} = 1$ when Attribute $k$ is marginally identifiable within equivalence class $[\aalpha]$.
Posterior weight on a class $[\aalpha]$ only provides information about the $k$-th attribute when $\delta_{[\aalpha],k} = 1$; otherwise, there is no information beyond the prior. 


\subsection{Exam Quality: the Marginal Identifiability Rate}\label{sec:eval}

Since non-identifiability is frequently unavoidable with Q-matrix based CDMs, it is important to measure its extent. For a more nuanced view, this is done on a marginal, basis.

Given the proportion $\nu_{\aalpha}$ of each attribute profile $\aalpha$, the proportion of the population for which the $k$-th attribute is marginally identifiable can be quantified by $\zeta_k$, as follows:
\begin{equation}\label{eq:zeta}
\zeta_k = \sum_{\{\aalpha: \delta_{[\aalpha],k} = 1\}}\nu_{\aalpha}.
\end{equation}
Let $\zzeta$ be the vector of all $\zeta_k$. Then $\zzeta$ is the proportion of the population for which each attribute is marginally identifiable, i.e., the marginal identifiability rate.

Oftentimes $\nu_{\aalpha}$, and thus $\zzeta$, is unknown. Under the conditions of Theorem~\ref{thm:consistent}, $\zzeta$ can be consistently estimated by its maximum likelihood estimator $\hat\zzeta$.
\begin{proposition} \label{prop:marg}
Suppose an assessment follows the DINA model, with known Q-matrix $Q$ and item parameters $\sb$ and $\gg$. Let $\hat\nu_{[\aalpha]}$ be the MLE estimate of $\nu_{[\aalpha]}$. Then
\begin{equation}\label{eq:propID_hat}
\hat\zeta_k = \sum_{\{[\aalpha]: \delta_{[\aalpha],k} = 1\}}\hat\nu_{[\aalpha]},\ k = 1,\ldots, K.
\end{equation}
is consistent as $N\rightarrow \infty$.
\end{proposition}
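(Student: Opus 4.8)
The plan is to derive Proposition~\ref{prop:marg} as a direct corollary of Theorem~\ref{thm:consistent} by exhibiting $\hat\zeta_k$ as a continuous function of $\hat\nnu$ and invoking the continuous mapping theorem. First I would observe that each $\delta_{[\aalpha],k}$ is a fixed constant determined entirely by the Q-matrix $Q$ (via the equivalence classes computed in Section~\ref{sec:part}), and in particular does not depend on the data or on the unknown population parameter. Consequently the index set $\{[\aalpha]: \delta_{[\aalpha],k} = 1\}$ is a fixed subset of the (finite) collection of equivalence classes. The quantity in \eqref{eq:propID_hat} is therefore a finite linear combination of the components of $\hat\nnu$ with known $0/1$ coefficients, i.e.\ $\hat\zeta_k = \cc_k^\top \hat\nnu$ for the fixed indicator vector $\cc_k$ whose entries are the $\delta_{[\aalpha],k}$.

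Next I would confirm that the analogous linear map sends the true parameter to the true target. Comparing the definition \eqref{eq:zeta} of $\zeta_k$, a sum over profiles $\aalpha$, with the class-level sum in \eqref{eq:propID_hat}, I would note that $\delta_{[\aalpha],k}$ is constant across all members of a single equivalence class (it is defined as a property of the class, not of the individual profile). Hence the profile-level sum $\sum_{\{\aalpha: \delta_{[\aalpha],k}=1\}}\nu_{\aalpha}$ collapses class by class into $\sum_{\{[\aalpha]: \delta_{[\aalpha],k}=1\}}\nu_{[\aalpha]}$ using the aggregation identity \eqref{eq:priorClass}, so that $\zeta_k = \cc_k^\top \nnu$ with the same coefficient vector $\cc_k$. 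This step is what guarantees we are estimating the right functional of the class proportions rather than a quantity that is itself non-identifiable.

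Finally I would apply consistency. Theorem~\ref{thm:consistent} gives $\hat\nnu \to \nnu$ (in probability, as $N\to\infty$). Since $\cc_k$ is a fixed vector, the map $\vv \mapsto \cc_k^\top \vv$ is continuous (indeed linear and bounded on the probability simplex), so by the continuous mapping theorem $\hat\zeta_k = \cc_k^\top \hat\nnu \to \cc_k^\top \nnu = \zeta_k$ for each $k = 1,\ldots, K$. One may state this componentwise or stack it into the vector statement $\hat\zzeta \to \zzeta$.

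The only subtle point, and the one I would treat carefully, is the well-definedness underlying the second step: that $\delta_{[\aalpha],k}$ really is a class invariant, so the rewriting of the profile-indexed sum \eqref{eq:zeta} as the class-indexed sum \eqref{eq:propID_hat} is legitimate. This follows immediately from the definition of marginal identifiability, but it is the hinge of the argument, because it is what lets the non-identifiable profile proportions $\nu_{\aalpha}$ drop out in favor of the consistently estimable class proportions $\nu_{[\aalpha]}$. Everything else is a routine application of the continuous mapping theorem to a fixed linear functional, so I do not expect any real obstacle beyond making that invariance explicit.
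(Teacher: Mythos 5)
Your proposal is correct and follows essentially the same route as the paper's proof: rewrite the profile-level sum \eqref{eq:zeta} as a class-level sum using the class-invariance of $\delta_{[\aalpha],k}$ and the aggregation identity \eqref{eq:priorClass}, then inherit consistency from Theorem~\ref{thm:consistent}. The only difference is presentational --- you make explicit the continuous-mapping-theorem step that the paper leaves implicit.
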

The consistency of $\hat\zzeta$ is a direct consequence of the consistency of $\hat\nu_{[\aalpha]}$ in Theorem~\ref{thm:consistent}. We thus obtain a very reasonable measure of exam quality, in terms of the proportion of the population for which each attribute is marginally identifiable.

\section{Classification}\label{sec:class}

Non-identifiability has potentially serious effects on respondent classification. Classification is generally conducted based on the posterior distribution $p(\aalpha|\xx) \propto p(\xx|\aalpha) p(\aalpha)$. Recall that profiles in the same equivalence class have the same likelihood. Thus, the posterior will simply be a reflection of the prior, without any added information from the data \cite{deCarlo2011}. In fact, within an equivalence class the posteriors are proportional to the priors. Given a prior $p(\aalpha)$, for any $\aalpha^1,\aalpha^2 \in [\aalpha]$, 
$$
\frac{p(\aalpha^1|\xx)}{p(\aalpha^2|\xx)} 
= \frac{p(\xx|\aalpha^1|)p(\aalpha^1)}{p(\x|\aalpha^2)p(\aalpha^2)} 
= \frac{p(\xx|[\aalpha])p(\aalpha^1)}{p(\xx|[\aalpha])p(\aalpha^2)} 
= \frac{p(\aalpha^1)}{p(\aalpha^2)}.
$$
Posteriors are often calculated by maximizing the marginal maximum likelihood $L(\nnu,\sb,\gg)$ via the E-M algorithm \cite{Haertel,dela2009, Rupp}. Then, since all vectors $\nnu$ with identical weights on each class $\nu_{[\aalpha]}$ have identical likelihoods, any $\nnu$ achieving the maximizing $\nu_{[\aalpha]}$ may result.  The values chosen are determined by the starting values, which have little validity for classification.

When the posterior is sensitive to the prior, it is important to work with $p([\aalpha])$, which can be estimated consistently, rather than $p(\aalpha)$. Thus classification here will be conducted based on $p([\aalpha]|\xx \propto p(\xx|[\aalpha])p([\aalpha])$ instead of the usual posterior. This calculation does not require a separate fitting of the model, since
\begin{equation}
\label{eq:postClass}
p([\aalpha]|\xx)
= \frac{p(\xx|[\aalpha])\nu_{[\aalpha]}  }{p(\xx)}
= \frac{p(\xx|[\aalpha])\sum_{\aalpha'\in[\aalpha]}\nu_{\aalpha'}}{p(\xx)}
= \sum_{\aalpha'\in[\aalpha]}p(\aalpha'|\xx)  
\end{equation}
From this posterior, we then define
\begin{eqnarray}
p_k^{\min}(\xx) & = & \sum_{[\aalpha]:\alpha_k=1,d_{[\aalpha],k}=1} p([\aalpha]|\xx),\label{eq:pmin} \\
p_k^{\max}(\xx) & = & p_k^{\min}(\xx) +\sum_{[\aalpha]:d_{[\aalpha],k}=0} p([\aalpha]|\xx),\label{eq:pmax}
\end{eqnarray}
where $\delta_{[\aalpha],k}$ is the marginal identifiability indicator defined in \eqref{eq: identifiable}. 
Classification follows from the fact that, depending upon the specific hyperprior on $\nnu$ or starting point of the E-M algorithm, the DINA model may produce marginal posterior probabilities of mastery $p(\alpha_k=1|\xx)$ anywhere in the range $[p_k^{\min}(\xx), p_k^{\max}(\xx)]$.
Thus, it is only appropriate to conclude that $\alpha_k = 1$ when $p_k^{\min}(\xx)$ is high, or that $\alpha_k = 0$ when $p_k^{\max}(\xx)$ is low. A natural cutoff for both is 0.5, but it may be adjusted as necessary. This classification method, from hereon referred to as the NIAD-DINA classification algorithm, accounts for both uncertainty in the prior and uncertainty caused by slipping and guessing. It is summarized in Table~\ref{tab:algClass}.

\begin{table}[ht]
\begin{center}
\caption{ NIAD-DINA classification algorithm}
\label{tab:algClass}
\begin{tabular}{rp{3in}c}
\toprule
Step&Procedure &q.v.\\
\midrule
Input: & Q-matrix $Q = (q_{jk})_{J\times K}$, data $X = (x_{ik})_{N\times J}$.\\
\noalign{\medskip}
(1) & Fit the model to produce $p(\aalpha|\xx)$.\\
(2) & Partition the attribute profile space. & Table~\ref{tab:alg}\\
(3) & Calculate the marginal identifiability vector $\delta_{[\aalpha]}$. &  \eqref{eq: identifiable}\\
(4) & Sum posteriors $p(\aalpha|x)$ for $p([\aalpha]|\xx)$. & \eqref{eq:postClass}\\
(5) & Calculate $p_k^{\min}(\xx)$ and $p_k^{\max}(\xx)$ for every $k,\xx$. & \eqref{eq:pmin}, \eqref{eq:pmax}\\
(6) & Classify: \\
& \hspace{.25in}If $p_k^{\min}>0.5$, then $\hat\alpha_k = 1$. \\
&\hspace{.25in}If $p_k^{\max} < 0.5$, then $\hat\alpha_k = 0$. \\
&\hspace{.25in}Else, $\hat\alpha_k = *$ (unclassified).\\
\noalign{\medskip}
Output: & Classifications $\hat\alpha^i_k  \in \{0,1,*\}$  for all $i,k$.\\
\bottomrule
\end{tabular}
\end{center}
\end{table}

\section{Extensions}\label{sec:exts}
\subsection{Variants of the DINA}
The methodology of partitioning in Section \ref{sec:part} applies to any model where the presence of a difference in the ideal response pattern fully determines the presence of a difference in the likelihood function. Included among these models are all variants of the DINA model listed in Section~\ref{sec:DINAvar}. Since these variants are, in essence, a restriction on the space of $\nu$, the  consistency result for $\nu_{[\aalpha]}$ in Section~\ref{sec:est}, along with all the following results, holds when the model is in fact correct. However, if the true $\nu_{[\aalpha]}$ do not fall under the set of values consistent with the restriction on the parameter space, then even estimates of $\nu_{[\aalpha]}$ will no longer be consistent. Thus, large differences in the $\hat\nu_{[\aalpha]}$ calculated under restricted models from those calculated under the NIAD-DINA model are symptomatic of model misspecification, and may imply that the DINA variant chosen is overly restrictive on the prior. Goodness-of-fit measures such as the AIC and BIC will reflect lack of fit appropriately if the saturated model has the correct number of parameters $2J + L$, rather than $2J + 2^K-1$.

\subsection{The DINO Model}
The DINO model also specifies item and attribute relationships using a Q-matrix, but the ideal responses are calculated as
$$\xi_j(Q,\aalpha)  = 1- \prod_{k=1}^K (1-\alpha_k)^{q_{jk}} = \textbf{1}(\alpha_k = q_{jk}=1\text{ for some }k).$$
As in the DINA model, the response probabilities are functions of item parameters $s_j = P(X_j = 0|\xi_j = 1)$ and $g_j = P(X_j = 1|\xi_j = 0)$. Under the DINA model, ideal responses are \emph{correct} when the respondent possesses all required attributes; under the DINO model, ideal responses are \emph{incorrect} when the respondent \emph{does not} possess all required attributes. Thus, for responses $X$ following the DINO model, the reverse responses $1-X$ follow the DINA model (with a reversed interpretation of the attribute profile vectors). This dualism implies that all the results of this paper apply to the DINO model.

\section{Results}\label{sec:results}
\subsection{Simulation Results}
We first demonstrate the procedures on simulated data. Responses are generated for $N = 5000$ resondents taking an assessment with $J = 6$ items measuring $K = 3$ distinct attributes. The respondents' mastery or nonmastery of the measured attributes 
is randomly generated according to the probability $p_{sim}(\aalpha)$ of each profile $\aalpha \in\{0,1\}^3$, as listed in Table~\ref{tab:p}. 
\begin{table}[ht]
\centering
\caption{Population proportions of each attribute profile}
\label{tab:p}
\begin{tabular}{lcccccccc}
\toprule
&\multicolumn{8}{c}{ $\aalpha$ }\\
\cmidrule(lr){2-9}
& 000 & 001 & 010 & 011 & 100 & 101 & 110 & 111 \\
\midrule
$p_{sim}$ & 0.27 & 0.00 & 0.01 & 0.04 & 0.10 & 0.16 & 0.20 & 0.21\\
\bottomrule
\end{tabular}
\end{table}
The responses themselves follow the DINA model according to the Q-matrix $Q_{sim}$ with slipping $s_{sim}$ and guessing $g_{sim}$ as shown in Table~\ref{tab:simParam}.
\begin{table}[ht]\footnotesize
\centering
\caption{Q-matrix, slipping, and guessing for simulated data.}
\begin{tabular}{cccc}
\toprule
Item ($j$) & Attribute vector ($\qq^j$) &Slipping ($s_j$) &Guessing ($g_j$) \\
\midrule
1. & 100 & 0.14 & 0.10 \\
2. & 110 & 0.12 & 0.15 \\
3. & 011 & 0.18 & 0.18 \\
4. & 100 & 0.17 & 0.18 \\
5. & 110 & 0.08 & 0.06 \\
6. & 011 & 0.05 & 0.06 \\
 \bottomrule
 \end{tabular}
 \label{tab:simParam}
\end{table}

The Q-matrix $Q_{sim}$ is incomplete, and the resulting instability in the posterior becomes clear once the data is fitted multiple times. As an example, the posterior probabilities of each attribute profile given the zero response vector $\bfzero = (0,0,\ldots, 0)$ are summarized in Table~\ref{tab:simPost0_profiles}. 
\begin{table}[ht]
\centering
\caption{Posterior probabilities given zero correct responses, $p(\aalpha|\xx = \bfzero)$}
\begin{tabular}{llcccccccc}\toprule
&&\multicolumn{8}{c}{ $\aalpha$ }\\
 \cmidrule(lr){3-10}
&& 000 & 001 & 010 & 011 & 100 & 101 & 110 & 111 \\
\midrule
\multicolumn{2}{l}{truth} \\
&& 0.91 & 0.02 & 0.05 & 0.00 & 0.01 & 0.01 & 0.00 & 0.00\\
\multicolumn{2}{l}{minimums} \\
&  DINA & 0.01 & 0.02 & 0.03 & 0.00 & 0.00 & 0.00 & 0.00 & 0.00\\
& HO-DINA & 0.13 & 0.09 & 0.03 & 0.00 & 0.01 & 0.01 & 0.00 & 0.00\\
&RHO-DINA & 0.55 & 0.11 & 0.29 & 0.00 & 0.02 & 0.01 & 0.00 & 0.00\\
&ind-DINA & 0.29 & 0.24 & 0.37 & 0.00 & 0.02 & 0.02 & 0.00 & 0.00\\
\multicolumn{2}{l}{maximums} \\
&DINA & 0.71 & 0.86 & 0.56 & 0.00 & 0.02 & 0.03 & 0.00 & 0.00\\
&HO-DINA & 0.62 & 0.81 & 0.71 & 0.00 & 0.02 & 0.02 & 0.00 & 0.00\\
&RHO-DINA & 0.58 & 0.13 & 0.30 & 0.00 & 0.02 & 0.01 & 0.00 & 0.00\\
&ind-DINA & 0.31 & 0.28 & 0.40 & 0.01 & 0.03 & 0.02 & 0.00 & 0.00\\
\noalign{\medskip}
\bottomrule
\end{tabular}
\flushleft
Note: Minimum and maximum values of the posterior $p(\aalpha|\xx = 0)$, as  generated over ten runs of the (random start) E-M algorithm. 
\label{tab:simPost0_profiles}
\end{table}
Here, the DINA, HO-DINA, and RHO-DINA are overparameterized and produce a wide range of results for Profiles $[000]$, $[001]$, and $[010]$. The slight variability in the ind-DINA estimates is a numerical artifact. While the ind-DINA does not suffer from nonidentifiability, it still does not give accurate estimates in this case since the model assumptions are incorrect.

Partitioning the attribute profile space as directed by Table~\ref{tab:alg} produces the five equivalence classes listed in Table~\ref{tab:simPart}, two of which have multiple members.  The table also reports the marginal identifiability vector $\delta_{[\aalpha]}$ for each class. Note that since Items 1 and 4 are devoted to Attribute 1, it is always marginally identifiable and $\delta_{[\aalpha],1}\equiv 1$. It is also clear that nonidentifiability most seriously affects Attribute 3, which is marginally non-identifiable for members of both [000] and [100]. Finally, Table ~\ref{tab:simPart} also reports E-M estimates of the proportion of respondents in each class under the DINA and several variants, along with the true proportion. Note the accuracy of the DINA estimates, which are consistent, and the inaccuracy of the ind-DINA estimates due to model misfit.
\begin{table}[ht]
\centering
\caption{Equivalence classes, along with their class sizes, true and maximum likelihood probabilities, and marginal identifiability vectors.}
\begin{tabular}{cccccccc}\toprule
&&\multicolumn{5}{c} {$\nu_{[\aalpha]}$}&\\
\cmidrule(lr){3-7}
$[\aalpha]$ & Size & True & DINA & HO-DINA & RHO-DINA & ind-DINA &$\delta_{[\aalpha]}$\\
\midrule
 $[000]$ & 3 & 0.29 & 0.30 & 0.29 & 0.29 & 0.22 & 100\\
 $[100]$ & 2 & 0.26 & 0.26 & 0.27 & 0.26 & 0.31 & 110\\
 $[011]$ & 1 & 0.04 & 0.04 & 0.04 & 0.04 & 0.08 & 111\\
 $[110]$ & 1 & 0.20 & 0.20 & 0.19 & 0.20 & 0.20 & 111\\
 $[111]$ & 1 & 0.21 & 0.21 & 0.21 & 0.21 & 0.18 & 111\\
\bottomrule
\end{tabular}
\label{tab:simPart}
\end{table}

We now consider variability in the marginal posterior probabilities $p(\alpha_k = 1|\xx)$. Table~\ref{tab:simPost0_alphaK} gives the sample range of $p(\alpha_k = 1|\xx)$ after ten runs of the E-M algorithm, in addition to  the  theoretical range. Note the large theoretical ranges for $p(\alpha_k=1|\xx = \bfzero), k = 2,3$. 
\begin{table}[ht]
\centering
\caption{Variability in $p(\alpha_k=1|\xx=\bfzero)$, the marginal posterior given the zero response vector.}
\begin{tabular}{lccc}\toprule
&\multicolumn{3}{c}{ $k$ }\\
 \cmidrule(lr){2-4}
& 1 & 2 & 3\\
\midrule
sample min  &  0.03 &   0.03 &   0.04\\
$p_k^{min}(\bfzero)$  &  0.03 &   0.00  &  0.00\\
  \noalign{\medskip}
sample max  &  0.03   & 0.56  &  0.89\\
$p_k^{max}(\bfzero)$   & 0.03   & 0.97 &   1.00\\
\bottomrule
\end{tabular}
\flushleft
Note: Probabilities calculated by fitting the DINA model over ten runs of E-M algorithm with random starts.
\label{tab:simPost0_alphaK}
\end{table}

Classification was conducted on a marginal basis, based on $p(\alpha_k=1|\xx)$, under each of the models. In addition, NIAD-DINA classification was performed (see Table~\ref{tab:algClass}). Marginal misclassification rates $p(\hat\alpha_k \neq \alpha_k)$ are compared in Table~\ref{tab:simMisclass}. Note that NIAD-DINA classification results in unclassified individuals; for example, $\hat\aalpha = (0**)$ for those with the zero response vector. This rate is also listed in Table~\ref{tab:simMisclass}. The DINA and HO-DINA are overparameterized and the misclassification rate for Attribute 3 may reach over 40\% in both models. The ind-DINA also performs poorly, but due to an overly restricted parameter space rather than nonidentifiability. Adjusting classification under the DINA to account for nonidentifiability according to the method described in Section~\ref{sec:class} solves both these issues. It may leave a large proportion of individuals unclassified, but this is a necessary consequence of the assessment design. Classifying these individuals would require further assumptions beyond the model.
\begin{table}[h]
\centering
\caption{Marginal misclassification rates under a variety of models.}
\begin{tabular}{cccccc}\toprule
&\multicolumn{5}{c}{ Model }\\
 \cmidrule(l){2-6}
k & DINA & HO-DINA & RHO-DINA & ind-DINA& NIAD-DINA\\
\midrule
1 & 0.07	       & 0.07          & 0.07 & 0.09 & 0.07 (0.00)\\
2 & 0.07-0.32 & 0.07-0.32 & 0.07 & 0.26 & 0.04 (0.32)\\
3 & 0.19-0.44 & 0.19-0.43 & 0.20 & 0.21 & 0.04 (0.56)\\
\bottomrule
\end{tabular}
\flushleft
Note: Range over 10 runs reported for overparameterized models. All cut-offs equal to 0.5. The proportion of respondents left unclassified under the NIAD-DINA is displayed within parentheses.
\label{tab:simMisclass}
\end{table}


In addition to controlling misclassification errors, we may also evaluate the quality of the assessment by measuring the marginal identifiability rate $\zzeta$ defined in \eqref{eq:zeta}. Table~\ref{tab:simID} shows both true and estimated values for $\zzeta$. Note once again that nonidentifiability affects Attribute 3 more severely than it does Attribute 2. In addition, estimates are generally accurate, except in the case of the ind-DINA, which suffers from lack of fit.
\begin{table}[h]
\centering
\caption{True and estimated values for $\zzeta$, marginal identifiability rate.}
\begin{tabular}{cccccc}\toprule
&&\multicolumn{4}{c}{ Model }\\
\cmidrule(l){3-6}
k & true & DINA & HO-DINA & RHO-DINA & ind-DINA\\
\midrule
1 & 1.00 & 1.00 & 1.00 & 1.00 & 1.00\\
2 & 0.71 & 0.70 & 0.71 & 0.71 & 0.78\\
3 & 0.44 & 0.45 & 0.45 & 0.45 & 0.47\\
\bottomrule
\end{tabular}
\label{tab:simID}
\end{table}


In terms of model selection, reducing the number of parameters for the DINA model to $2M+L$ from the original $2M + 2^K$ reduces the comparative advantage of the restricted models. In Table~\ref{tab:simAIC}, the AIC value of the RHO-DINA barely edges out that of the DINA with identifiability adjustment.
\begin{table}[h]
\centering
\caption{AIC and BIC for the DINA, RHO-DINA, and ind-DINA.}
\begin{tabular}{lccc}\toprule
& parameters & AIC & BIC\\
 \midrule
NIAD-DINA        & 17 & 32862.8 & 32973.6 \\
RHO-DINA    & 16 &32861.2 & 32965.5 \\
ind-DINA & 15 & 32995.9 & 33093.6 \\
\bottomrule
\end{tabular}
\label{tab:simAIC}
\end{table}

\subsection{A Fractions Data Example}
We now turn to the widely analyzed fraction subtraction data set of \citeA{Tatsuoka1990}. It is composed of the twenty items listed in Table~\ref{tab:frac_jtems}.
\begin{table}[h]
\caption{Items from the fraction subtraction data set \protect\cite{Tatsuoka1990}.}
\begin{tabular}{cccccc}
\toprule
No. & Item & No. & Item & No. & Item\\
\midrule
\,1. & $\nicefrac{5}{3}-\nicefrac{3}{4}$ & 
\,8. & $\nicefrac{2}{3}-\nicefrac{2}{3}$ & 
15. & $2 - \nicefrac{1}{3}$ \\
\,2. & $\nicefrac{3}{4} - \nicefrac{3}{8}$ & 
\,9. & $3\,\nicefrac{7}{8}-2$ & 
16. & $4\,\nicefrac{5}{7} - 1\,\nicefrac{4}{7}$\\
\,3. & $\nicefrac{5}{6}-\nicefrac{1}{9}$ & 
10. & $4\,\nicefrac{4}{12} - 2\,\nicefrac{7}{12}$ & 
17. & $7\,\nicefrac{3}{5} - \nicefrac{4}{5}$\\
\,4. & $3\,\nicefrac{1}{2}-2\,\nicefrac{3}{2}$ & 
11. & $4\,\nicefrac{1}{3}-2\,\nicefrac{4}{3}$ & 
18. & $4\,\nicefrac{1}{10} - 2\,\nicefrac{8}{10}$\\
\,5. & $4\,\nicefrac{3}{5} - 3\,\nicefrac{4}{10}$ & 
12. & $\nicefrac{11}{8}-\nicefrac{1}{8} $& 
19. & $4 - 1\,\nicefrac{4}{3}$\\
\,6. & $\nicefrac{6}{7}-\nicefrac{4}{7}$ & 
13. & $3\,\nicefrac{3}{8}-2\,\nicefrac{5}{6}$ & 
20. & $4\,\nicefrac{1}{3} - 1\,\nicefrac{5}{3}$\\
\,7. & $3-2\,\nicefrac{1}{5}$ & 
14. & $3\,\nicefrac{4}{5}-3\,\nicefrac{2}{5}$ \\
 \bottomrule
 \end{tabular}
 \label{tab:frac_jtems}
\end{table}
The Q-matrix in Table~\ref{tab:Qfrac} comes from \citeA{dela}, and specifies the following eight attributes: $\alpha_1 = $ convert a whole number to a fraction; $\alpha_2 = $ separate a whole number from a fraction; $\alpha_3 = $ simplify before subtracting; $\alpha_4 = $ find a common denominator; $\alpha_5 = $ borrow from whole number part; $\alpha_6 = $ column borrow to subtract the second numerator from the first; $\alpha_7 = $ subtract numerators; and $\alpha_8 = $ reduce answers to simplest form. 

\begin{table}[ht]\footnotesize
\centering
\caption{Q-matrix from \protect\citeA{dela}.}
\begin{tabular}{cccccc}
\toprule
Item & Attributes ($\qq^j$) &Item &Attributes ($\qq^j$) &Item &Attributes ($\qq^j$) \\
\midrule
\ 1. & 00010110 & \ 8. & 00000010 & 15. & 10000010 \\
\ 2. & 00010010 & \ 9. & 01000000 & 16. & 01000010\\
\ 3. & 00010010 & 10. & 01001011 & 17. & 01001010\\
\ 4. & 01101010 & 11. & 01001010 & 18. & 01001110\\
\ 5. & 01010011 & 12. & 00000011 & 19. & 11101010\\
\ 6. & 00000010 & 13. & 01011010 & 20. & 01101010\\
\ 7. & 11000010 & 14. & 01000010   \\
 \bottomrule
 \end{tabular}
 \label{tab:Qfrac}
\end{table}

As pointed out by \citeA{deCarlo2011}, this assessment exemplifies the identifiability issues of the DINA model. While Attributes 2 and 7 have items dedicated solely to them, all other attributes appear only in combination. In fact, Attribute 3 only appears in Item 4, in conjunction with Attributes 2, 5, and 7. Attribute 7 is required for all items except one, making it difficult to draw conclusions about other attributes when it has not been mastered. Table~\ref{tab:frac_post0} displays the marginal posterior probabilities of mastery for each attribute, given the zero response vector. The posterior displayed for the DINA is just one possible output of the E-M algorithm for this data; meanwhile, note the high probabilities of mastery under the ind-DINA model. Common sense dictates that something is out of place when the analysis states that students with a score of zero cannot subtract numerators, but can do everything else, from finding a common denominator to borrowing to reducing to simplest form.
 \begin{table}[!ht]
 \begin{center}
 \caption{Marginal posterior probabilities of mastery given the zero response vector, $p(\alpha_k=1|\xx=\bfzero)$}
 \begin{tabular}{lcccccccc}
 \toprule
 &\multicolumn{8}{c}{k}\\
 \cmidrule(l){2-9}
& 1 & 2 & 3 & 4 & 5 & 6 & 7 & 8\\
\midrule
DINA		& 0.50 & 0.08 & 0.50 & 0.52 & 0.53 & 0.41 & 0.00 & 0.59\\
HO-DINA	& 0.00 & 0.13 & 0.31 & 0.05 & 0.02 & 0.30 & 0.00 & 0.25\\
RHO-DINA	& 0.02 & 0.13 & 0.12 & 0.05 & 0.02 & 0.25 & 0.00 & 0.18\\
ind-DINA		& 0.74 & 0.86 & 0.96 & 0.86 & 0.75 & 0.98 & 0.00 & 0.94\\
 \bottomrule
 \end{tabular}
 \label{tab:frac_post0}
 \end{center}
\end{table}

\begin{table}[!ht]
\centering
\caption{Multiple-member equivalence classes, along with their class sizes, maximum likelihood probabilities, and marginal identifiability vectors.}
\begin{tabular}{ccccccc}\toprule
&&\multicolumn{4}{c} {$\nu_{[\aalpha]}$}&\\
\cmidrule(lr){3-6}
$[\aalpha]$ & Size & DINA & HO-DINA & RHO-DINA & ind-DINA &$\delta_{[\aalpha]}$\\
\midrule
$[00000000]$ & 64 & 0.15 & 0.12 & 0.12 & 0.02 & 01000010\\
$[01000000]$ & 64 & 0.04 & 0.06 & 0.06 & 0.31 & 01000010\\
$[00000010]$ &   8 & 0.01 & 0.02 & 0.03 & 0.00 & 11010011\\
$[10000010]$ &   8 & 0.00 & 0.00 & 0.00 & 0.00 & 11010011\\
$[00000011]$ &   8 & 0.02 & 0.03 & 0.02 & 0.00 & 11010011\\
$[10000011]$ &   8 & 0.00 & 0.00 & 0.00 & 0.00 & 11010011\\
$[01000010]$ &   4 & 0.03 & 0.04 & 0.04 & 0.00 & 11011011\\
$[01000011]$ &   4 & 0.11 & 0.09 & 0.08 & 0.00 & 11011011\\
$[11000010]$ &   4 & 0.00 & 0.00 & 0.00 & 0.00 & 11011011\\
$[11000011]$ &   4 & 0.01 & 0.01 & 0.02 & 0.01 & 11011011\\
$[00010010]$ &   4 & 0.00 & 0.00 & 0.00 & 0.00 & 11010111\\
$[10010010]$ &   4 & 0.00 & 0.00 & 0.00 & 0.00 & 11010111\\
$[00010011]$ &   4 & 0.00 & 0.00 & 0.00 & 0.00 & 11010111\\
$[10010011]$ &   4 & 0.00 & 0.00 & 0.00 & 0.00 & 11010111\\
$[00010110]$ &   4 & 0.02 & 0.00 & 0.00 & 0.00 & 11010111\\
$[10010110]$ &   4 & 0.00 & 0.00 & 0.00 & 0.00 & 11010111\\
$[00010111]$ &   4 & 0.00 & 0.01 & 0.01 & 0.01 & 11010111\\
$[10010111]$ &   4 & 0.00 & 0.00 & 0.00 & 0.03 & 11010111\\
$[01010010]$ &   2 & 0.00 & 0.00 & 0.00 & 0.00 & 11011111\\
$[11010010]$ &   2 & 0.00 & 0.00 & 0.00 & 0.00 & 11011111\\
$[01010011]$ &   2 & 0.01 & 0.01 & 0.00 & 0.00 & 11011111\\
$[11010011]$ &   2 & 0.00 & 0.00 & 0.00 & 0.00 & 11011111\\
$[01010110]$ &   2 & 0.00 & 0.01 & 0.01 & 0.00 & 11011111\\
$[11010110]$ &   2 & 0.00 & 0.00 & 0.00 & 0.01 & 11011111\\
$[01010111]$ &   2 & 0.05 & 0.06 & 0.06 & 0.03 & 11011111\\
$[11010111]$ &   2 & 0.06 & 0.06 & 0.06 & 0.09 & 11011111\\
\bottomrule
\end{tabular}
\label{table:fracParts}
\end{table}
With eight attributes in the Q-matrix, there are a total of 256 possible attribute profiles. They can be divided into just 58 different equivalence classes by the partitioning algorithm, 32 of them containing a single identifiable element. The 26 multiple-profile equivalence classes are listed in Table~\ref{table:fracParts}, which also displays their class sizes, maximum likelihood probabilities, and marginal identifiability vectors. Within these multiple-profile equivalence classes, Attributes 2 and 7 are always marginally identifiable, while Attribute 3 is never so; this is natural considering our previous observations about the Q-matrix. Profiles within the largest classes contain many zeroes, since under the DINA model non-identifiability affects a particular attribute only for respondents who do not possess other attributes used in combination with that attribute. Also note that the ind-DINA shows signs of model misspecification, since its estimates $\hat\nu_{[\aalpha]}$ deviate strongly from the estimates derived from the other models. 

Table~\ref{table:fracEval} shows the estimated marginal identifiability rates, $\hat\zzeta$. At the low end, $\hat\zeta_3 = 0.48$, bringing into question the ability of this assessment to measure mastery of Attribute 3. Attribute 6 does only slightly better, with $\hat\zeta_6 = 0.64$. Note that Attribute 6 is only utilized in Items 1 and 18; in both cases it appears in conjunction with at least two other attributes.
 \begin{table}[!ht]
 \begin{center}
 \caption{Estimated marginal identifiability rates $\zeta_k$.}
 \begin{tabular}{lcccccccc}
 \toprule
 &\multicolumn{8}{c}{k}\\
 \cmidrule(l){2-9}
& 1 & 2 & 3 & 4 & 5 & 6 & 7 & 8\\
\midrule
DINA           	& 0.81 & 1.00 & 0.48 & 0.81 & 0.75 & 0.64 & 1.00 & 0.81\\
HO-DINA 	& 0.82 & 1.00 & 0.47 & 0.82 & 0.75 & 0.64 & 1.00 & 0.82\\
RHO-DINA 	& 0.82 & 1.00 & 0.48 & 0.82 & 0.75 & 0.63 & 1.00 & 0.82\\
ind-DINA        	& 0.66 & 1.00 & 0.47 & 0.66 & 0.62 & 0.64 & 1.00 & 0.66\\
 \bottomrule
 \end{tabular}
 \label{table:fracEval}
 \end{center}
\end{table}

Finally, it is useful to consider how the reduction of the parameter space for the DINA model, based on identifiability, affects model selection by AIC and BIC. In particular, the AIC no longer prefers the ind-DINA to the full DINA model once the reduced parameter space has been applied. The BIC, which will generally choose sparser models than the AIC, still reports lower values for the ind-DINA, but the comparison is much tighter.
\begin{table}[!ht]
\centering
\caption{AIC and BIC for the DINA, RHO-DINA, and ind-DINA.}
\begin{tabular}{lccc}\toprule
& parameters & AIC & BIC\\
 \midrule
DINA	        	& 296 & 9397.0 & 10665.2 \\
NIAD-DINA        	& \ 98 & 9001.0 & \ 9420.9 \\
HO-DINA    	& \ 56 & 8959.7 & \ 9199.6 \\
RHO-DINA    	& \ 49 & 8961.9 & \ 9171.9 \\
ind-DINA 		& \ 48 & 9208.3 & \ 9413.9 \\
\bottomrule
\end{tabular}
\label{tab:fracAIC}
\end{table}

\section{Discussion}
In general, it is difficult to obtain a complete Q-matrix. Oftentimes, due to the demands of practicality, assessments must involve items that require a combination of skills. Using the tools discussed in this paper, it is possible to determine the extent to which nonidentifiability affects classification and estimation under the DINA model. Marginal identifiability rates $\zzeta$, which can be estimated consistently, provide an overall measure of the extent of non-identifiability; meanwhile, NIAD-DINA classification takes marginal identifiability into consideration in order to control classification errors that are otherwise quite sensitive to the prior information. The results here suggest that when designing items to test a particular attribute, if using a combination of skills is unavoidable, it is best to combine that attribute with basic attributes mastered by a large proportion of the population. After all, it is only impossible to recover information about a particular attribute when the respondent does not posess one or more of the other attributes tested by the same item. 

Many of the methods currently in use may resolve idenifiability issues by enforcing restrictions on the attribute profile space. Variants of the DINA such as the ind-DINA, HO-DINA, and RHO-DINA accomplish this by specifying a structure and a prior on the probabilities $p(\aalpha)=\nu_{\aalpha}$. Although this may eliminate non-identifiability and create a unique global maximum for the likelihood, model misspecification becomes a risk. Thus, careful comparison of these variants to the NIAD-DINA becomes important.

\vspace{\fill}\pagebreak

\appendix
\renewcommand{\theequation}{A\arabic{equation}}
\setcounter{equation}{0}

\section{Proofs}

\begin{proof}[Proof of Proposition~\ref{thmIdeal}]
Suppose $\xi^1_j = \xi^2_j$ for all $j$ such that $ 1-s_j \neq g_j$. 
If $1-s_j = g_j$, then the response distribution for item $j$ does not depend on $\xxi$:
\begin{align*}
P( x_j|\xxi,s,g) = &(1-s_j)^{\xi_jx_j} g_j^{(1-\xi_j)x_j}s_j^{\xi_j(1-x_j)}(1- g_j)^{(1-\xi_j)(1-x_j)} \\
= &(1-s_j)^{x_j} s_j^{1-x_j} = g_j^{x_j}(1-g_j)^{1-x_j}.
\end{align*}
Thus, for every $\xx \in \{0,1\}^J$, 
\begin{align*}
P(\xx|\xxi^1,s,g) 
=&\prod_{j=1}^m P(x_j|\xi^1_j,s_j,g_j)\\
=&\prod_{\{i:1-s_j=g_j\}} P( x_j|\xi^1_j,s_j,g_j)
\prod_{\{i:1-s_j \neq g_j\}} P(x_j|\xi^1_j,s_j,g_j)\\
=&\prod_{\{i:1-s_j=g_j\}} P( x_j|\xi^2_j,s_j,g_j)
\prod_{\{i:1-s_j \neq g_j\}} P( x_j|\xi^2_j,s_j,g_j)
= P(\x|\xxi^2,s,g)
\end{align*}
and $\aalpha^1$ cannot be separated from $\aalpha^2$.
 
Now suppose that $\xi^1_j \neq \xi^2_j$ for some $j$ such that $ 1-s_j \neq g_j$. Then
\begin{align*}
P( x_j = 1|\xxi^1,s,g) &= (1-s_j)^{\xi^1_j} g_j^{1-\xi^1_j}  \neq g_j^{\xi^1_j}(1-s_j)^{1-\xi^1_j} \\
&=g_j^{1-\xxi^2}(1-s_j)^{\xi^2_j} = P(x_j = 1|\xxi^2,s,g),
\end{align*}
so the response distributions differ.
\end{proof}

\begin{proof}[Proof of Proposition~\ref{prop:complete}]
Suppose that $Q$ is complete. WLOG, for $j = 1, \ldots, K$ let $\qq^j = e_j$. Then, for $j = 1, \ldots, K$, 
$\xi_j(Q,\aalpha) = \alpha_j$ and given any two attribute profiles $\aalpha^1\neq \aalpha^2$, 
$$\xi_{1:K}(Q,\aalpha^1) = \aalpha^1 \neq \aalpha^2 = \xi_{1:K}(Q,\aalpha^2)$$
By Proposition \ref{thmIdeal}, $Q$ separates any $\aalpha^1 \neq \aalpha^2$. 
 
Now suppose that  $\exists k_* \in \{1,\ldots, K\}$ such that $e_{k_*} \not \in R_Q$. WLOG, suppose $k_* = 1$. Consider profiles $\aalpha = e_1$ and $\aalpha' = \bfzero$, the zero column-vector. For each item $j = 1, \ldots, J$, if $q_{j1} = 0$ then 
\begin{align*}
\xi_j(Q,e_1) = \left(1^0\right) \prod_{k\neq 1} 0^{q_{jk}}
 =&  \left(0^0\right) \prod_{k\neq 1} 0^{q_{jk}} =\xi_j(Q,\bfzero).
\end{align*}
Else, $q_{j1} = 1$ and there exists some $k_{**} \neq 1$ such that $q_{jk_{**}} = 1$ and
\begin{align*}
\xi_j(Q,e_1) = \left(0^1\right) \prod_{k\neq k_{**}} [\bfone(k=1)]^{q_{jk}}
 = 0 = \left(0^1\right) \prod_{k\neq k_{**}} 0^{q_{jk}} =\xi_j(Q,\bfzero).
\end{align*}
Thus, $\xxi(Q,e_1) = \xxi(Q,\bfzero)$ and by Proposition \ref{thmIdeal}, attribute profiles $e_1$ and $\bfzero$ cannot be separated.
\end{proof}

\begin{proof}[Proof of Proposition~\ref{prop:equiv}]
The relation `$\sim$' is (i) reflexive: $\xxi(Q,\aalpha) = \xxi(Q,\aalpha) \Rightarrow \aalpha \sim \aalpha$ for all profiles $\aalpha$, (ii) symmetric: if $\aalpha^1 \sim \aalpha^2$, then $\xxi(Q,\aalpha^1) = \xxi(Q,\aalpha^2)$ and $\aalpha^2 \sim \aalpha^1$ for any profiles $\aalpha^1, \aalpha^2$; (iii) transitive: if $\aalpha^1 \sim \aalpha^2$ and $\aalpha^2 \sim \aalpha^3$, then $\xxi(Q,\aalpha^1) = \xxi(Q,\aalpha^2) = \xxi(Q,\aalpha^3)$ and $\aalpha^1 \sim \aalpha^3$ for any profiles $\aalpha^1, \aalpha^2, \aalpha^3$.
\end{proof}

\begin{proof}[Proof of Theorem~\ref{thm:consistent}]
We can write the likelihood  
$$L(\nnu) = p(X|\nnu) = \prod_{i=1}^N p(\xx^i|\nnu)$$
and the log-likelihood as
$$\ell(\nnu) = \sum_{i=1}^N \log p(\xx^i|\nnu) = \sum_{\xx\in \{0,1\}^J} N_{\xx} \log \left(\sum_{[\aalpha]} p\left(\xx\left|[\aalpha]\right.\right)\nu_{[\aalpha]}\right),$$
where $N_{\xx} = \#\{i: \xx^i = \xx\}$.

We first check for identifiability. Suppose there are $L$ distinct equivalence classes partitioning the attribute profile space. The probability of each vector $\xx$ given each class $[\aalpha]$ can be written as a $2^J\times L$ matrix $P = (p_{\xx,[\aalpha]})$, where $p_{\xx,[\aalpha]} = p(\xx|[\aalpha])$. The vector of total probabilities for each response vector $\xx$, for any $\nnu$, can be written as the matrix product $P\nnu$. Thus, we need to show that there is no $\nnu^1,\nnu^2$ s.t.\ $P\nnu^1=P\nnu^2$. Define the vector inequality operation ``$\geq$'' so that $\xx \geq \yy$ iff $x_j \geq y_j$ for all $j$. Let the $T$-matrix be the $2^J\times L$ matrix $T=(t_{\xx,[\aalpha]})$, indexed over all response vectors $\xx$ and equivalence classes $[\aalpha]$, such that $t_{\xx,[\aalpha]} = p(\XX\geq \xx|[\aalpha])$. 
This is a variant of the $T$-matrix used to examine Q-matrix identifiability in \citeA{JLGXZY2011}.
Then $P\nnu^1=P\nnu^2$ iff $T\nnu^1=T\nnu^2$, and the identifiability condition is equivalent to  $T$ being a rank $L$ matrix.

First, suppose that $g \equiv 0$. WLOG, assume that the $L$ equivalence classes $[\aalpha^1],\ldots, [\aalpha^L]$ are ordered lexicographically by their minimal representatives, $\aalpha^{1*},\ldots, \aalpha^{L*}$. Thus, if $\aalpha^{k*} \geq \aalpha^{\ell*}$, then $k \geq \ell$. Also, let $\xx^\ell = \xxi(Q,[\aalpha^\ell])$ for $\ell = 1,\ldots, L$. Define $T^* = (t^*_{k\ell})$, where $t^*_{k\ell} = t_{\xx^k,[\aalpha^\ell]}$. Then $T^*$ is an $L\times L$ submatrix of $T$, containing the specified rows $\xx^1,\ldots, \xx^L$.  Moreover, $T^*$ is an upper triangular matrix. This is a consequence of the fact that for any $k>\ell$, $\aalpha^{\ell*} \not\geq \aalpha^{k*}$. Thus, there must be some item $j\in\{1,\ldots, J\}$ for which individuals with profiles $\aalpha \in [\aalpha^k]$ possess the necessary attributes, but individuals with profiles $\aalpha \in [\aalpha^\ell]$ do not. Then $ p(X_j=1|[\aalpha^\ell]) = g_j = 0 \Rightarrow t_{\xx^k,[\aalpha^\ell]} = t^*_{k,\ell} = 0$.
 In addition, on the diagonal, $t^*_{\ell,\ell} = \prod_{\{i: x_j^\ell =1\}}(1-s_j) \neq 0$. Thus, $T^*$ is a rank $L$ matrix, as is $T$.
 
 Next suppose that $g\not\equiv 0$. Consider the $T$ matrix as a function of $\cc   = 1- \sb$ and $\gg$. Then the $T$-matrix $T(\cc,\gg)$ can be written as a linear transformation of another $T$-matrix $T(\cc-\gg,\bfzero)$. For any subset of the items $S$ and any constants $b_j$, 
 $$\prod_{j\in S} (b_j-g_j) = \prod_{j\in S} b_j  - \sum_{j\in S} g_j \prod_{k\neq j} b_k + \sum_{k\neq j \in S} g_jg_k \prod_{\ell\neq j,k} b_\ell - \cdots + (-1)^{\#S}\prod_{j\in S} g_j$$
For each entry of the $T$-matrix, the $b_j$ will correspond to either $c_j$ or $g_j$, depending on the value of $\xi_j(Q,[\aalpha])$. Then, $T(\cc-\gg,0) = D(\gg)\cdot T(\cc,\gg)$, where the transformation matrix $D(g)$ is a $2^J\times 2^J$ matrix depending solely on $\gg$. Since the rows of $T$ are ordered lexicographically by $\xx$, $D(g)$ is lower triangular with diagonal $diag(D) \equiv 1$. Thus, $D$ is full-rank and $\text{rank}(T(\cc,\gg)) = \text{rank}(T(\cc-\gg,\bfzero)) = L$.
The model is identifiable, and all other conditions for the consistency of the maximum likelihood estimator are clearly evident.
\end{proof}

\begin{proof}[Proof of Proposition~\ref{prop:marg}]
Since
$$ \sum_{\{\aalpha:\delta_{[\aalpha],k}=1\}}\nu_{\aalpha}
= \sum_{\{[\aalpha]:\delta_{[\aalpha],k}=1\}}\sum_{\aalpha'\in[\aalpha]}\nu_{\aalpha}
= \sum_{\{[\aalpha]:\delta_{[\aalpha],k}=1\}}\nu_{[\aalpha]},$$
$\zeta_k$ can be written in terms of $\nu_{[\aalpha]}$ as
$$\zeta_k = \sum_{\{[\aalpha]:\delta_{[\aalpha],k}=1\}}\nu_{[\aalpha]}.$$
By Theorem~\ref{thm:consistent}, the MLE $\hat\nu_{[\aalpha]}$ is consistent as $N\rightarrow \infty$ under the conditions of the proposition. Thus,
$\hat\zeta_k$
is consistent as $N\rightarrow \infty$.
\end{proof}

\vspace{\fill}\pagebreak

\bibliographystyle{apacite}
\bibliography{IRT2}

\vspace{\fill}\pagebreak
%
%
%
%
%
%
%
\end{document}